\newtheorem{theorem}{Theorem}
\def \Prob {\mathbb{P}} 
\def \A {\mathbf{A}}    
\def \P {\mathbf{P}}    
\def \U {\mathbf{U}}    
\def \G {\mathbf{G}}    
\def \I {\mathbf{I}} 
\def \H {\mathbf{H}}    
\def \E {\mathbf{E}}    
\title{\LARGE \bf
Optimal regulation of protein degradation to \\ schedule cellular
events with precision}
\author{Khem Raj Ghusinga and Abhyudai Singh
\thanks{Khem Raj Ghusinga is with the Department of Electrical and Computer Engineering, University of Delaware,
        Newark, DE 19716, USA
        {\tt\small khem@udel.edu}}%
\thanks{Abhyudai Singh is with Faculty of Electrical and Computer Engineering, Biomedical Engineering, Mathematical Sciences, University of Delaware, Newark, DE, 19716 USA
        {\tt\small absingh@udel.edu}}
\thanks{AS is supported by the National Science Foundation Grant DMS-1312926, University of Delaware Research Foundation (UDRF) and Oak Ridge Associated Universities (ORAU).}}%
\begin{document}
\maketitle
\thispagestyle{empty}
\pagestyle{empty}
\begin{abstract}
An important occurrence in many cellular contexts is the crossing of a prescribed threshold by a regulatory protein. The timing of such events is stochastic as a consequence of the innate randomness in gene expression. A question of interest is to understand how gene expression is regulated to achieve precision in event timing. To address this, we model event timing using the first-passage time framework - a mathematical tool to analyze the time when a stochastic process first crosses a specific threshold. The protein evolution is described via a simple stochastic model of gene expression. Moreover, we consider the feedback regulation of protein degradation to be a possible noise control mechanism employed to achieve the precision. Exact analytical formulas are developed for the distribution and moments of the first-passage time. Using these expressions, we investigate for the optimal feedback strategy such that noise (coefficient of variation squared) in event timing is minimized around a given fixed mean time. Our results show that the minimum noise is achieved when the protein degradation rate is zero for all protein levels. Lastly, the implications of this finding are discussed. 
\end{abstract}

\section{Introduction}
The process of gene expression is central to a cell's function: the genetic information is used to make mRNA molecules which are further translated into proteins, the molecular machines to carry out different tasks. An important step in execution of many cellular functions is when a protein attains an effective level \cite{McA97}.  For example, in gene regulatory pathways, the protein expressed by one gene regulates expression of a downstream gene once its level crosses a certain threshold \cite{Lam96,GAA98,DRO02,FeM98}. Other examples include cell-fate decisions which are made when specific regulatory proteins cross their respective critical levels \cite{FeM98,InS92}, and temporal order of gene activation which involves different thresholds of a given regulatory protein \cite{ZMR04,FGL05,Alo06,AKR07}.

The probabilistic nature of biochemical reactions and the fact that copy number of the constituents is small make expression of a gene a stochastic process \cite{BKC03,RaO05,YXR06,RaV08,KEB05,SiS13}. As a result, a population of cells induced at the same time sees difference in the times at which a critical protein level is attained in individual cells. While the stochasticity in timing does offer an advantage in terms of phenotype variability before a cell commits to an irreversible cell fate \cite{YuN13}, in some cases precision in timing is paramount \cite{AKR07}. Although how cells achieve precision in timing is not understood very well, we expect that the gene expression might be regulated for this purpose. How some commonly found regulation motifs affect the noise in gene expression has been examined in several studies \cite{BeS00,Alo06,ElK06,Alo07,SiH09b,TZS07,SiH09a,CRB12}. 

In this paper, the regulation mechanism we analyze is degradation control of the protein via a feedback mechanism \cite{ElK06,CRB12}. Specifically, we investigate what form of feedback control of protein degradation results in minimum noise in event timing (quantified as the coefficient of variation squared, $CV^2$), given a fixed mean time. The event time is modeled as a first-passage time ($FPT$), i.e., the first time at which a prescribed protein level is achieved. Previously, we have studied a similar question in \cite{GhS14, GFS15} assuming the regulation strategy to be self-regulation of transcription rate. In \cite{GhS14}, a expression of a gene in bursts was considered along with the assumption that protein degradation rate is zero. Further, in \cite{GFS15} we revisited the same question assuming a constant degradation of the protein but simplified the production in bursts as  birth-death process. In another work \cite{GhS15}, we considered both constant protein production in bursts and constant decay and developed the formulas for $FPT$ for the model to predict how changing various model parameters affect the $FPT$ statistics. This paper extends the $FPT$ calculations when the degradation rate is an arbitrary function of protein count. Employing exact solutions and semi-analytical approach, we show that noise in $FPT$ is minimized when the protein does not degrade at all. 

Remainder of the paper is organized as follows. In section II, we formulate the stochastic gene expression model wherein a protein is produced in bursts and its degradation is a function of the protein level. In the next section, $FPT$ for this gene expression model is determined. Section IV deals with determining the moments of $FPT$, particularly the expressions of first two moments which are required to compute the noise ($CV^2$). Section V discusses the optimal regulation strategy that minimizes noise in $FPT$. The results are discussed next in section VI. Some calculations to support the arguments in the main paper are provided in the appendices.

\section{Stochastic Gene Expression Model}
Let $x(t)$ denote the protein level at time $t$. We consider expression of a gene as depicted in Fig. \ref{fig:model}. We assume that the promoter is always active and consider the four fundamental processes, namely, transcription (making of mRNAs from gene), translation (production of protein from a mRNA), mRNA degradation and protein degradation. Also, the feedback regulation of protein degradation is formulated by assuming that if the protein level $x(t)=i$, then the degradation rate of one protein molecule is given by $\gamma_i$ where $\gamma_i$ represents an arbitrary function of $i$.  

\begin{figure}[h]
\centering
\includegraphics[width=0.85\linewidth]{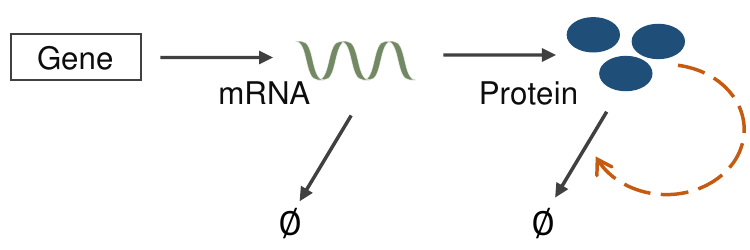}
\caption{\emph{Gene expression model with feedback regulation of protein degradation}. The gene transcribes to make mRNAs which are further translated to proteins. The protein is assumed to regulate its own degradation in order to achieve precision in the time at which a certain protein level is achieved.}
\label{fig:model}
\end{figure}

The mRNA's dynamics can be ignored in this model by assuming that its half-life is considerably smaller than that of the protein - an assumption which is true in most cases \cite{Pau05,SiB12,SRD12,SRC10}. This enables us to reduce the model to what is called a bursty birth-death process wherein each transcription event creates a mRNA molecule which degrades immediately after synthesizing a burst of protein molecules \cite{ShS08,SRD12,SiB12}. The burst size is assumed to follow a geometric distribution which is consistent with previous theoretical and experimental studies \cite{Pau05, FVR05, ShS08, Ber78,Rig79,EJF11,SiB12,SRD12,SRC10}. Mathematically, the probabilities of having an arrival of a burst or a protein degradation in a small time interval $(t,t+dt]$ are given by
\begin{subequations}
\begin{align}
&\mathbb{P}\left( x(t+dt)=i+B | x(t)=i \right) = {k}\;dt, \\
&\mathbb{P}\left( x(t+dt)=i-1 | x(t)=i \right) = i\gamma_i dt.
\end{align}
\end{subequations}
Here $k$ is the transcription rate, $B$ denotes the geometrically distributed burst size while $\mathbb{P}$ represents probability. The probability mass function of $B$ is given by:
\begin{equation}
\label{eqn:goemetricburstpmf}
\mathbb{P}(B=j)=\mu \left(1-\mu \right)^j,\,\mu  \in (0,1),\,j \in \{0,1,2,...\}.
\end{equation}
The mean burst size (average number of protein molecules produced in one mRNA life-time), $b$, can be expressed as:
\begin{equation}
\left<B\right>=b = \frac{1-\mu}{\mu}=\frac{k_p}{{\gamma}_m},
\end{equation}
where $k_p$ and $\gamma_m$ respectively denote the translation rate and the mRNA degradation rate. We are interested in determining the first-time at which the aforementioned model reaches a fix threshold $X$. In the next section, we present these calculations. 
\section{First--passage time calculations}
The first-passage time for the random process $x(t)$ describing protein level to cross a threshold $X$ is defined as:
\begin{equation}
FPT:=\inf\{t:x(t)\geq X\}.
\end{equation}
To compute $FPT$, we employ same approach we used in our previous work \cite{GFS15,GhS15}. This involves considering a particle hopping on an integer lattice. A site on the lattice represents the values protein level can take. The forward jumps correspond to a birth in random bursts (protein production) while a backward jump corresponds to death (protein degradation). The sites with values greater than or equal to $X$ are considered to be absorbing, i.e., the process stops once the particle reaches one of these sites. For such a process, the probability that the threshold $X$ is crossed for the first time in a infinitesimal small time interval $(t,t+dt]$ can be computed as
\begin{align}
\Prob \{FPT \in (t,t+dt)\}
                &=\Prob \{ x(t+dt) \geq X | x(t) \leq X-1\}.
\end{align}
Denoting the probability density function of $FPT$ by $f_{FPT}(t)$, we can write
\begin{subequations}
\begin{align}
f_{FPT}(t)dt
               & =\sum_{i=0}^{X-1}\Prob \left(x(t+dt)\geq X | x(t)=i\right)\mathbb{P}\left(x(t)=i\right), \\
               &= \sum_{i=0}^{X-1}\left(k\Prob\left(B \geq X-i\right)\,dt \right) \Prob \left(x(t)=i\right), \\
               &= \sum_{i=0}^{X-1}k(1-\mu)^{X-i}\Prob\left(x(t)=i\right)\, dt,
\label{eqn:FPTpdfdefI}
\end{align}
\end{subequations}
Here we have made use of $\Prob \left(B \geq X-i \right)=(1-\mu)^{X-i}$ which comes from the fact that distribution of $B$ is geometric with parameter $\mu$ (see equation \eqref{eqn:goemetricburstpmf}). Equation \eqref{eqn:FPTpdfdefI} can also be written as multiplication of two vectors $\U$ and $\P$ as
\begin{equation}
f_{FPT}(t) = \mathbf{U P(t)},
\label{eqn:FPTpdfdefII}
\end{equation}
where the row vector $\U$ and the column vector $\P(t)$ are given as follows
\begin{align}
&\U  ={\begin{bmatrix}
k\left(1-\mu\right)^{X} & k \left(1-\mu\right)^{X-1} & \cdots & k \left(1-\mu\right)
\end{bmatrix}}, \\
&\P (t) = {\begin{bmatrix}
p_0(t) & p_1(t) & \cdots & p_{X-1} (t)
\end{bmatrix}}^T.
\end{align}
In the expression for $\P(t)$ we have used the notation $p_i(t)$ which stands for $\Prob \left(x(t)=i\right)$. To determine $\P(t)$, we write the chemical master equation (or the forward Kolmogorov equation) for the bursty birth-death process with absorbing state\cite{McQ67, GFS15,GhS15}:
\begin{subequations}
\begin{align}
&\dot{p_0}(t) = -k (1-\mu)p_0(t)+\gamma_1  p_1(t),\\
&\dot{p_i}(t) = -\left(k(1-\mu)+i\gamma_i \right)p_i(t)+(i+1)\gamma_{i+1} p_{i+1}(t) \nonumber \\
& \qquad \qquad + \sum_{n=0}^{i-1} k \mu\left(1-\mu\right)^{i-n} p_n(t), \quad 1 \leq i \leq X-2, \\
&\dot{p}_{X-1}(t) = -\left(k (1-\mu)+(X-1)\gamma_{X-1}\right)p_{X-1}(t) \nonumber \\
& \qquad \qquad + \sum_{n=0}^{X-2} k \mu (1-\mu)^{X-n-1} p_n(t).
\end{align}
\end{subequations}
These equations can be expressed as $\dot{\P}(t)=\A \P (t)$ where elements of matrix $\A$ are given by
\begin{equation}
a_{ij} = \begin{cases} 
0, & j>i+1 \\
(i-1)\gamma_{i-1}, & j=i+1 \\
-k\left(1-\mu\right)-(i-1)\gamma_{i-1}, & j=i \\
k \mu \left(1-\mu \right)^{i-j}, & j<i
\end{cases}.
\label{eqn:matrixA}
\end{equation}
The above system of differential equations has the following solution
\begin{equation}
\P(t) = \exp(\A t)\P(0),
\end{equation} 
where $\mathbf{P}(0)={\begin{bmatrix} 1 & 0 & \cdots & 0 \end{bmatrix}}^T$ is the initial probability distribution (as $x(t)=0$ at $t=0$). Using the expression of $\P(t)$, the distribution of FPT can be written as
\begin{equation}
f_{FPT}(t)= \U \exp(\A t)\P(0).
\label{eqn:fptdistribution}
\end{equation}
Equation \eqref{eqn:fptdistribution} gives the probability density function of $FPT$ in terms of known matrices $\U$, $\A$ and $\P(0)$. This expression can be generalized or simplified for a variety of cases such as a feedback regulation of the transcription rate, a different distribution of the burst size, a birth-death process, etc. For each of these cases, the matrix $\U$ and $\A$ will change. Furthermore, $\P(0)$ can also be generalized to other distributions if the initial protein count is not assumed to be zero. In the next section, we use the expression in \eqref{eqn:fptdistribution} to determine expressions of the first two moments of $FPT$.
\section{Moments of FPT} 
\label{sec:FPTmoments}
In this section, we first give the expression of a general $m^{th}$ moment of $FPT$ in terms of the known matrices $\U$, $\A$ and $\P(0)$. Using this result, we further compute the exact formulas for the first two moments of $FPT$. 

The $m^{th}$ moment of $FPT$ can be computed as follows:
\begin{subequations}
\begin{align}
\left<FPT^m\right> &= \int_{0}^{\infty} t^m\, \U \exp(\A t)\P(0) dt, \\
                              &= \U \left(\int_{0}^{\infty} t^m \exp(\A t) dt \right) \P(0). \label{eqn:fptmoments}
\end{align}
\end{subequations}
The matrix $\A$ is a full-rank (invertible) Hurwitz matrix (see Appendices \ref{app:hurwitz}, \ref{app:Ainv}). These properties can be used to get the following expression for a $m^{th}$ order $FPT$ moment \cite{GhS15}
\begin{align}
\left<FPT^m\right> =(-1)^m m! \U (\A^{-1})^{m+1} \P(0), \label{eqn:fptmoment-final}
\end{align}
Using equation \eqref{eqn:fptmoment-final}, we can get formulas for mean and second order moment of $FPT$.
\subsection{Mean $FPT$}
The expression of mean $FPT$ is given by
\begin{align}
\left<FPT\right>&=\U \A^{-2} \P(0).
\end{align}
As discussed in Appendix \ref{app:Ainv}, we have $\A^{-1}=\E^{-1}\A_0^{-1}$ where $\E^{-1}$ and $\A_0^{-1}$ are respectively given by equations \eqref{eqn:Einv} and \eqref{eqn:A0inv}. Therefore, we have
\begin{equation}
\left<FPT\right>=\U \A^{-1}\E^{-1} \A_0^{-1} \P(0)
\end{equation}
The reason for not writing one of the $\A^{-1}$ in terms of $\E^{-1}$ and $\A_0^{-1}$ is that $\U \A^{-1}$ has a much simpler expression as described in Appendix \ref{app:UAinv}.  We further observe that since $\P(0)$ has zero elements excepts for the first one, $\A_0^{-1}\P(0)$ is just the first column of $\A_0$. Therefore $\E^{-1}\A_0^{-1}\P(0)$ is given by
\begin{subequations}
\small{\begin{align}
&-\frac{\mu}{1-\mu}{{\begin{bmatrix}
1            &  \frac{\gamma_1}{k+\gamma_1}\frac{1}{1-\mu}                    & \cdots        & \prod_{l=1}^{j-1}\left(\frac{l\gamma_l}{k+l\gamma_l} \frac{1}{1-\mu}\right)           \\
0           & \frac{k}{k+\gamma_1}                      & \cdots          & \frac{k}{k+\gamma_1} \prod_{l=2}^{j-1}\left(\frac{l\gamma_l}{k+l\gamma_l} \frac{1}{1-\mu}\right) \\
\vdots                          & \vdots                                            & \vdots         & \vdots                                               \\
0           & 0                 & \cdots             & \frac{k}{k+(X-1)\gamma_{X-1}}   \\
\end{bmatrix}}}
{\begin{bmatrix}
\frac{1}{\mu k} \\
\frac{1}{k} \\
\vdots      \\
\frac{1}{k} \\
\end{bmatrix}}, \\
&=-\frac{\mu}{1-\mu}{\begin{bmatrix}
\frac{1-\mu}{\mu}\frac{1}{k}+\frac{1}{k}\left(1+\sum_{j=2}^{X}\prod_{l=1}^{j-1}\left(\frac{l\gamma_l}{k+l\gamma_l} \frac{1}{1-\mu}\right)\right)\\
\frac{1}{k+\gamma_1}\left(1+\sum_{j=3}^{X}\prod_{l=2}^{j-1}\left(\frac{l\gamma_l}{k+l\gamma_l} \frac{1}{1-\mu}\right)\right)      \\
\vdots             \\
\frac{1}{k+(X-1)\gamma_{X-1}}   \\
\end{bmatrix}}.
\end{align}} \normalsize
\end{subequations}
By equation \eqref{eqn:UAinv}, we have $\U \A_0^{-1}=-{\begin{bmatrix} 1 & 1 & \cdots & 1 \end{bmatrix}}$. Therefore the mean $FPT$ is negative of summation of the elements of the vector $\E^{-1}\A_0^{-1}\P(0)$ and is given by
\small{
\begin{equation}
\frac{1}{k}+\frac{\mu}{1-\mu} \sum_{i=1}^{X} \frac{1}{k+(i-1)\gamma_{i-1}} \Bigg(1+ \sum_{j=i+1}^{X} \prod_{l=i}^{j-1}\frac{1}{k+l\gamma_l} \frac{l\gamma_l}{(1-\mu)}\Bigg).
\end{equation}}
\normalsize
\subsection{Second order moment}
The calculations of the second order moment are similar to those of the mean $FPT$. For this reason, we skip the detailed steps and only provide the final formula.
\begin{subequations}
\begin{multline}
\left<FPT^2\right>=2\left(\frac{\mu}{1-\mu}\right)^2\sum_{i=1}^{X}  \left(\frac{1}{k+(i-1)\gamma_{i-1}} \xi_i + \right. \\
\left.\sum_{j=i+1}^{X} \xi_j \left( \frac{1}{k+(i-1)\gamma_{i-1}}\prod_{l=i}^{j-1}\frac{1}{k+l\gamma_l} \frac{l\gamma_l}{(1-\mu)}\right)\right),
\label{eqn:fptsecondmomentdeg}
\end{multline}
\begin{equation}
\text{where} \quad \xi_i:=\frac{1}{k}\left(\sum_{j=1}^{i-1}\eta_j + \frac{\eta_i}{\mu}\right), \quad \text{and} \quad \quad \quad \;\;\;\;\;\;\;\;\;
\label{eqn:xii}
\end{equation}
the terms denoted by $\eta_i$ are given by
\small{
\begin{equation}
\frac{1-\mu}{\mu}\frac{1}{k}\delta_{i-1}+\frac{1}{k+(i-1)\gamma_{i-1}}\left(1 + \sum_{j=i+1}^{X} \prod_{l=i}^{j-1}\frac{1}{k+l\gamma_l} \frac{l\gamma_l}{(1-\mu)}\right).
\end{equation}}
\normalsize
\end{subequations}

Having developed expressions of the moments of $FPT$, we next investigate the optimal feedback regulation of the protein degradation rate that minimizes noise in $FPT$.
\begin{figure}[h]
\centering
\includegraphics[width=\linewidth]{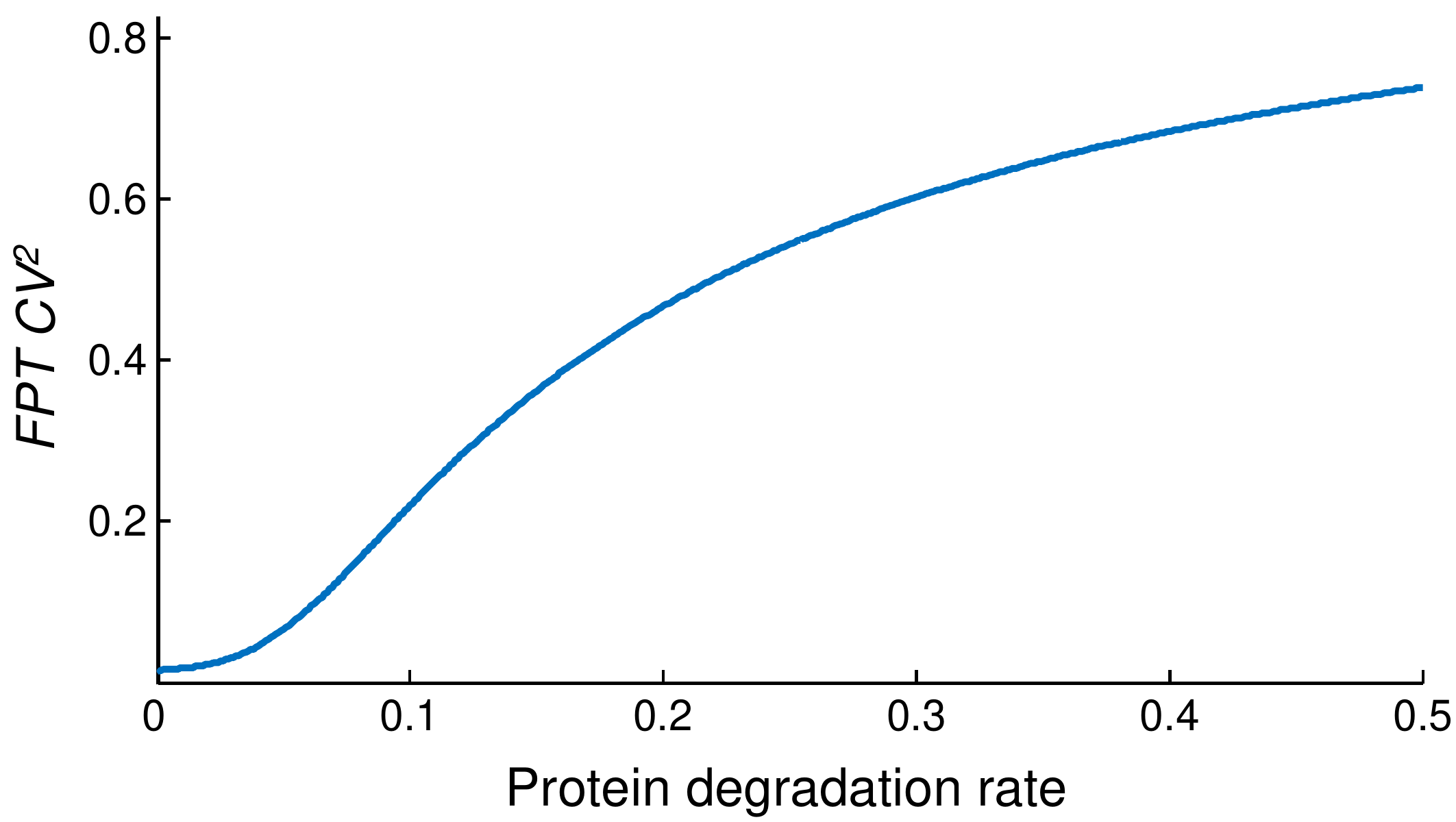}
\caption{\emph{In absence of any feedback regulation, the $CV^2$ of $FPT$ is minimum when the protein degradation rate $\gamma=0$.} The plot shows increase in $CV^2$ as the degradation rate $\gamma$ is increased. The mean $FPT$ is kept constant at 50 minutes by a corresponding adjustment in the transcription rate $k$ for each value of $\gamma$. Values of the parameters used are: threshold $X$=500 molecules, mean burst size $b$=1 molecule.}
\label{fig:openloop}
\end{figure}
\section{Optimal regulation of protein degradation}
We begin with considering an open loop control wherein the protein degradation rate is constant, i.e., $\gamma_i=\gamma$. Using the formulas for the moments of $FPT$ that we developed in the previous section, we plot the $CV^2$ of $FPT$ as the degradation rate is varied while keeping the mean $FPT$ constant with a simultaneous variation in the transcription rate $k$. 

As shown in Fig. \ref{fig:openloop}, the $CV^2$ of $FPT$ increases as $\gamma$ is increased. That is, if there is no feedback regulation of the protein degradation, the best strategy is to have a zero degradation of the protein. It remains to be seen whether or not we can do better by having a feedback control of the protein degradation rate.

We explore this by first analyzing it for a birth-death process which is a simplification of the bursty birth-death process when the mean burst size $b \ll 1$ \cite{GFS15}. 
\subsection{Optimal feedback regulation of protein degradation for a birth-death process}
To analyze the birth-death process, we use the formulas of $FPT$ moments developed in\cite{JoD08}. We would like to point out that in the limit $b \ll 1$ or by carrying out calculation in the same fashion as this paper, we can derive $FPT$ moments (see Appendix \ref{sec:BDPformulas}). The reason of going ahead with the formulas given in \cite{JoD08} is that they are easier to analyze in the present context. The $FPT$ moments are given by
\begin{align}
&\left<FPT\right>=\sum_{i=1}^{X}\frac{1}{k \pi_{i-1}}\sum_{l=0}^{i-1} \pi_{l}, \; \pi_0=1, \; \pi_i=\frac{k^i}{i! \gamma_1 \gamma_2 \cdots \gamma_i}, \\
&var\left(FPT\right)=\sum_{i=1}^{X}\frac{2}{k\pi_{i-1}}\sum_{j=1}^{i-1} \frac{1}{k\pi_{j-1}}\left(\sum_{l=0}^{j-1} \pi_{j-1}\right)^2 \nonumber \\
&\qquad \qquad \qquad \qquad \qquad  \qquad+\frac{1}{k^2 \pi^2_{i-1}}\left(\sum_{l=0}^{i-1} \pi_{l}\right)^2.
\end{align}
We claim that the coefficient of variance squared ($CV^2$, defined as variance over mean squared) is lower bounded for a birth-death process. The result is presented as a theorem.
\begin{theorem}
For a birth-death process with a constant birth rate $k$ and protein level dependent death rate $i\gamma_i$, the $CV^2$ of $FPT$ is lower bounded by ${1}/{X}$ where $X$ is the $FPT$ threshold. The lower bound is achieved only when the degradation rates $\gamma_i$ are zero.
\end{theorem}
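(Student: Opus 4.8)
The plan is to sidestep the explicit $\pi_i$-formulas for the moments and instead exploit the renewal structure of the birth--death first--passage time. The key observation is that the passage from $0$ to the threshold $X$ splits as a sum of one--step passage times, $FPT=\sum_{i=0}^{X-1}\tau_i$, where $\tau_i$ is the first time to reach level $i+1$ starting from level $i$. Applying the strong Markov property at the first hitting time of each level shows that the $\tau_i$ are mutually independent, so means and variances add: $\langle FPT\rangle=\sum_i m_i$ and $\operatorname{var}(FPT)=\sum_i\operatorname{var}(\tau_i)$, with $m_i:=\langle\tau_i\rangle$. This reduces the claim about $CV^2=\operatorname{var}(FPT)/\langle FPT\rangle^2$ to a per--step variance estimate combined with a convexity inequality.

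First I would prove the lemma that each one--step passage time satisfies $CV^2(\tau_i)\ge 1$, with equality exactly when the death rate $i\gamma_i$ out of level $i$ vanishes. The clean route is the renewal recursion $\tau_i\stackrel{d}{=}W_i+B_i(\tau_{i-1}+\tau_i')$, where $W_i\sim\mathrm{Exp}(k+i\gamma_i)$ is the holding time at level $i$, $B_i$ is Bernoulli with downward--step probability $i\gamma_i/(k+i\gamma_i)$, and $\tau_{i-1},\tau_i'$ are independent copies. Taking first and second moments and simplifying (using that level $0$ has zero death rate, so $\tau_0\sim\mathrm{Exp}(k)$ anchors the recursion) is the computational heart of the argument; it collapses to the identity $\langle\tau_i^2\rangle-2\langle\tau_i\rangle^2=\frac{i\gamma_i}{k}\langle\tau_{i-1}^2\rangle\ge 0$. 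Since $\langle\tau_{i-1}^2\rangle>0$, this gives $\operatorname{var}(\tau_i)\ge m_i^2$, i.e. $CV^2(\tau_i)\ge 1$, with equality iff $i\gamma_i=0$.

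With the lemma in hand the bound follows at once:
\[
CV^2 \;=\; \frac{\sum_{i=0}^{X-1}\operatorname{var}(\tau_i)}{\bigl(\sum_{i=0}^{X-1}m_i\bigr)^2}\;\ge\;\frac{\sum_{i=0}^{X-1}m_i^2}{\bigl(\sum_{i=0}^{X-1}m_i\bigr)^2}\;\ge\;\frac{1}{X},
\]
where the last step is the Cauchy--Schwarz inequality $\bigl(\sum m_i\bigr)^2\le X\sum m_i^2$ on the $X$ nonnegative means. For the equality clause, $CV^2=1/X$ forces the first inequality to be tight, hence every $\tau_i$ must be exponential, which by the lemma means $i\gamma_i=0$ for all $1\le i\le X-1$; this already makes each $\tau_i\sim\mathrm{Exp}(k)$, so all $m_i=1/k$ and the Cauchy--Schwarz step is tight too. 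Conversely, when all $\gamma_i=0$ the process is pure birth and $FPT$ is $\mathrm{Erlang}(X,k)$, giving $CV^2=1/X$ exactly. I expect the main obstacle to be the per--step lemma: the recursion is elementary, but the moment algebra must be carried through carefully to expose the nonnegative remainder $\tfrac{i\gamma_i}{k}\langle\tau_{i-1}^2\rangle$ and thereby pin the equality case precisely onto $\gamma_i=0$. As a compact alternative one may invoke Keilson's theorem that the passage time from $0$ to $X$ is a convolution of $X$ exponentials whose rates $r_1,\dots,r_X$ are the eigenvalues of the negated sub--generator; then $CV^2=\sum_j r_j^{-2}/\bigl(\sum_j r_j^{-1}\bigr)^2\ge 1/X$, and since $\prod_j r_j=k^X$ while $\sum_j r_j=Xk+\sum_i i\gamma_i$, tightness of Cauchy--Schwarz forces $r_j\equiv k$ and hence $\sum_i i\gamma_i=0$.
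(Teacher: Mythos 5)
Your proof is correct, but it takes a genuinely different route from the paper. The paper works directly with the closed-form moment expressions imported from its reference [JoD08]: it applies the Cauchy--Schwarz inequality to the explicit sum $\left<FPT\right>=\sum_{i=1}^{X}\frac{1}{k\pi_{i-1}}\sum_{l=0}^{i-1}\pi_l$, and then observes that the variance formula contains exactly the sum of squares $\sum_{i=1}^{X}\frac{1}{k^2\pi_{i-1}^2}\bigl(\sum_{l=0}^{i-1}\pi_l\bigr)^2$ plus additional nonnegative terms, yielding $\left<FPT\right>^2 < X\,var(FPT)$ whenever the rates are nonzero. You instead exploit the probabilistic structure: the skip-free-upward property gives the decomposition $FPT=\sum_{i=0}^{X-1}\tau_i$ into independent single-level passage times, and your renewal recursion $\tau_i\stackrel{d}{=}W_i+B_i(\tau_{i-1}+\tau_i')$ yields the per-step identity $var(\tau_i)=m_i^2+\frac{i\gamma_i}{k}\left<\tau_{i-1}^2\right>$ (I checked the moment algebra; it closes exactly as you claim, using $km_i=1+i\gamma_i m_{i-1}$), after which Cauchy--Schwarz on the means finishes the job. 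Both proofs invoke Cauchy--Schwarz at the same final step, but the key lemmas differ: the paper's is an algebraic manipulation of cited formulas, yours is a self-contained probabilistic variance bound. Your route buys three things: (i) it does not rely on the external moment formulas (which, as printed in the paper, even contain an index typo, $\sum_{l=0}^{j-1}\pi_{j-1}$ for $\sum_{l=0}^{j-1}\pi_l$); (ii) it handles the equality case rigorously, whereas the paper's claim that Cauchy--Schwarz equality holds ``only in the limit $\gamma_i\rightarrow 0$'' is asserted rather than derived --- in your version any single $i\gamma_i>0$ forces strict inequality through the explicit remainder $\frac{i\gamma_i}{k}\left<\tau_{i-1}^2\right>>0$; and (iii) it generalizes immediately, e.g.\ to level-dependent birth rates $k_i$, where the same argument gives $CV^2\geq 1/X$ with equality iff all deaths vanish and all birth rates coincide. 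The Keilson-eigenvalue alternative you sketch is also valid (the determinant of the negated sub-generator is indeed $k^X$, as a column-reduction or induction shows), though it imports a deeper spectral result than the problem requires.
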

\begin{proof}
We first show the equality. Let $\gamma_i=0$ for $i=1,2,\cdots,X-1$. In this case, we have $\left<FPT\right>=X/k$ and $var(FPT)=X/k^2$. Therefore, $CV^2=1/X$. Next, we consider the case when $\gamma_i \neq 0$. Using the Cauchy-Schwarz inequality, we have
\begin{align}
\left<FPT\right>^2=\left(\sum_{i=1}^{X}\frac{1}{k \pi_{i-1}}\sum_{l=0}^{i-1} \pi_{l}\right)^2 \leq X  \sum_{i=1}^{X}\frac{1}{k^2 \pi^2_{i-1}}\left(\sum_{l=0}^{i-1} \pi_{l}\right)^2.
\end{align}
The equality above holds only if
\small{
\begin{equation}
\frac{1}{k^2 \pi^2_{i-1}}\left(\sum_{l=0}^{i-1} \pi_{l}\right)^2=\frac{1}{k^2 \pi^2_{j-1}}\left(\sum_{l=0}^{j-1} \pi_{l}\right)^2, \forall i, j \in \{1,\cdots,X-1\}
\end{equation}}
\normalsize \noindent which is true only in the limit when $\gamma_i \rightarrow 0$. Therefore, when $\gamma_i\neq 0$, we have 
\begin{subequations}
\begin{align}
&\left<FPT\right>^2 < X \sum_{i=1}^{X}\frac{1}{k^2 \pi^2_{i-1}}\left(\sum_{l=0}^{i-1} \pi_{l}\right)^2 \\
\implies &\left<FPT\right>^2 < X \left( \sum_{i=1}^{X}\frac{2}{k \pi_{i-1}}\sum_{j=1}^{i-1} \frac{1}{k\pi_{j-1}}\left(\sum_{l=0}^{j-1} \pi_{j-1}\right)^2\right)\nonumber \\
&\qquad \qquad \qquad\qquad  + X \left(\frac{1}{k^2 \pi^2_{i-1}}\left(\sum_{l=0}^{i-1} \pi_{l}\right)^2 \right) \\
\implies &\left<FPT\right>^2 < X\; var(FPT) \\
\implies & \frac{var(FPT)}{\left<FPT\right>^2}=CV^2 > \frac{1}{X}.
\end{align}
\end{subequations}
This concludes the proof.
\end{proof}

The above result shows that no matter what feedback strategy in degradation is employed, the $CV^2$ in timing will always be greater than $1/X$ which is achieved when the degradation rates for all protein levels is zero. Next, we explore the optimal feedback for the bursty birth-death process.

\subsection{Optimal feedback regulation of protein degradation for a bursty birth-death process}
Intuitively, the result we obtained in the previous section for the birth-death process should not change for the bursty birth-death process. This is because the difference between both these processes is in the birth events and the feedback in death should affect them in similar fashion. We show numerical results to substantiate this argument. We assume that as the protein level is increased, it represses its degradation activity via a negative feedback given by
\begin{equation}
\gamma_i=\frac{\gamma_{\max}}{1+(c\,i)^H}.
\end{equation}
Here $H$ represents the Hill coefficient, $\gamma_{\max}$ is the maximum degradation rate, $i$ is protein level and $c$ is referred to as the feedback strength. In Fig. \ref{fig:negfb}, we show $CV^2$ of $FPT$ as $c$ is varied for given value $\gamma_{\max}$. The mean $FPT$ is kept constant by changing the transcription rate. It can be seen that as the feedback becomes stronger, the $CV^2$ decreases. That is, faster the protein degradation rate decreases, lower the $CV^2$ gets. 
\begin{figure}
\centering
\includegraphics[width=\linewidth]{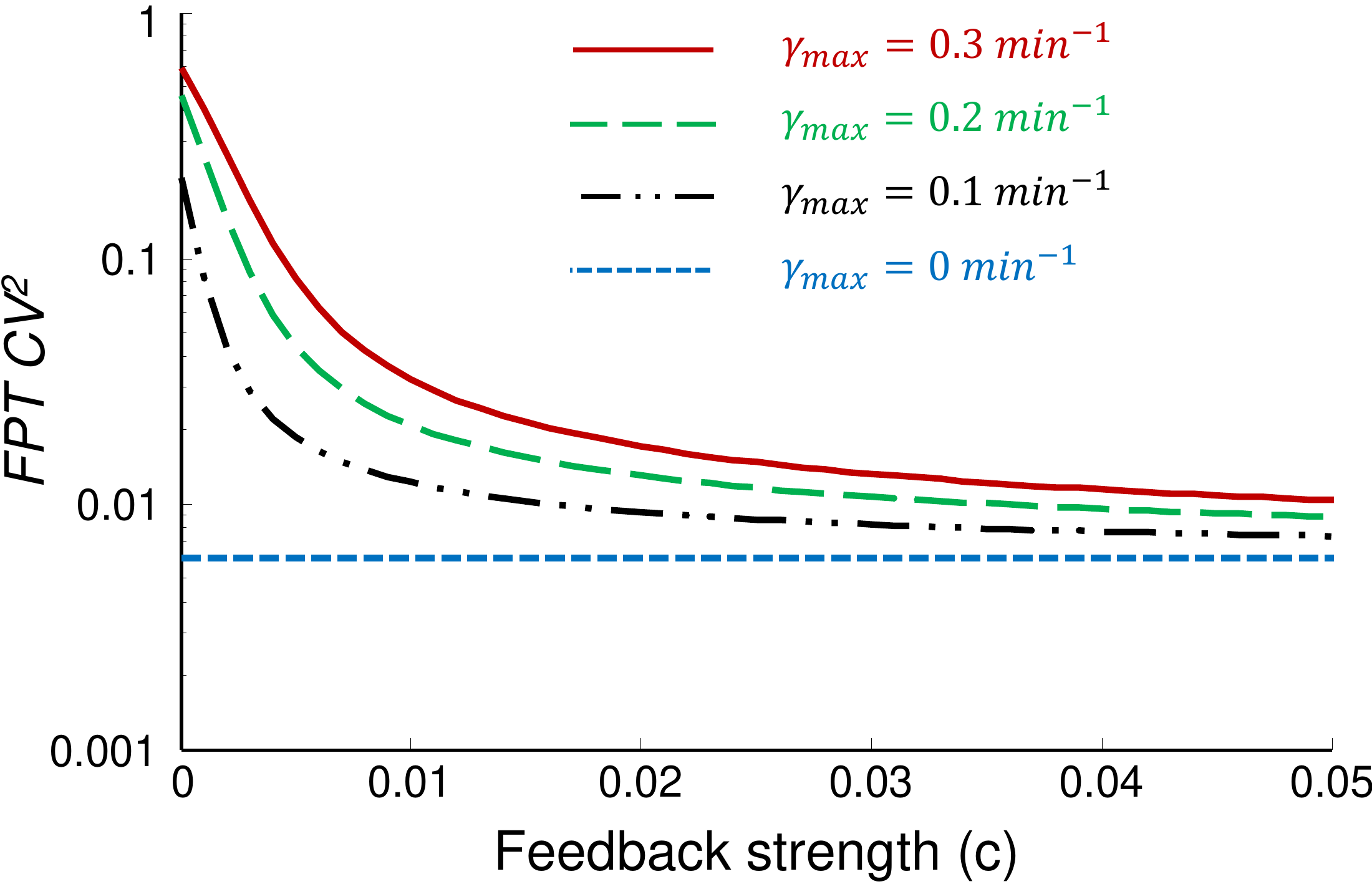}
\caption{\emph{Zero degradation of protein provides minimum noise in timing than the case wherein a negative feedback mechanism is employed when protein degrades.} A negative feedback of the degradation rate is considered as $\gamma_i=\gamma_{max}/(1+c i)$. The plots are shown for different values of $\gamma_{max}$. For each value of the parameter $c$ (feedback strength), a corresponding value of the transcription rate $k$ is computed such that the mean $FPT$ is constant at 50 minutes. Values of the parameters used are: threshold $X$=500 molecules, mean burst size $b$=1 molecule.}
\label{fig:negfb}
\end{figure}

The global minimum in $CV^2$ is the case when the protein does not degrade. For non-zero values of $\gamma_{\max}$, the $CV^2$ approaches this minimum value for very strong negative feedbacks ($c \approx 1$). We have not shown results for the positive feedback to the degradation rate but have checked that the $CV^2$ gets worse as the feedback strength is increased - a result that is expected.

\section{Discussion}
In this work, we characterized the time taken by a protein to reach a certain threshold using the well-known first-passage time framework. Further, we investigated if precision in $FPT$ can be achieved by regulating the degradation rate of the protein while keeping the mean $FPT$ fixed. We showed that this is not possible and the best strategy rather is to have no degradation of the protein. 

This finding complements the results of our previous works in \cite{GhS14,GFS15} wherein we considered the regulation of transcription rate. We showed that when protein does not degrade, the best regulation strategy of the transcription rate is an open loop system \cite{GhS14}. In contrast, if a constant rate of protein decay is considered then the optimal transcription rates seem to be a mixture of both positive and negative feedbacks \cite{GFS15}. Combining these two results, it can be suggested that the best strategy for achieving minimum noise in $FPT$ is to have no degradation of the protein and have open loop (no feedback) production. 

 Questions similar to the one addressed in this work can also be asked about other commonly found gene regulation motifs such as feedforward loops \cite{Alo06}. Further, this work is confined to analyzing a reduced model of gene expression, ignoring the effects of promoter switching \cite{SVK13}, and the parameter regimes when mRNA and protein half-lives are comparable. In future, we would aim to investigate along these lines. 

\appendix
\subsection{Proof that $\A$ is a Hurwitz matrix}
\label{app:hurwitz}
To prove that the matrix defined in equation \eqref{eqn:matrixA} is a Hurwitz matrix, we show that it satisfies the following conditions stated in  \cite{LWY07}:
\begin{enumerate}
\item The diagonal elements $a_{ii} < 0$ for $i=1,2,\cdots,X$,
\item $\displaystyle \max_{1\leq j \leq X} \sum_{\substack{i=1 \\j \neq i}}^{X} \left \lvert \frac{a_{ij}}{a_{jj}} \right \rvert < 1$.
\end{enumerate}
Note that $a_{ii}=-k(1-\mu)-(i-1)\gamma_{i-1} <0$. Therefore, the first requirement is fulfilled. Also, $j=1,2,\cdots,X$
\begin{subequations}
\small{
\begin{align}
\sum_{\substack{i=1 \\j \neq i}}^{X} \left \lvert \frac{a_{ij}}{a_{jj}} \right \rvert &= \frac{(j-1)\gamma_{j-1}}{k(1-\mu)+(j-1)\gamma} \nonumber \\ 
& \qquad \qquad +\; k\,\mu \sum_{i=j+1}^{X} \frac{(1-\mu)^{i-j}}{k(1-\mu)+(j-1)\gamma_{j-1}}\\
&= \frac{k(1-\mu)\left(1-(1-\mu)^{X-j}\right)+(j-1)\gamma_{j-1}}{k(1-\mu)+(j-1)\gamma_{j-1}} < 1.
\end{align}}
\end{subequations}
\normalsize
\noindent That is, $\A$ satisfies the second condition as well. Hence, $\A$ is Hurwitz.

\subsection{Calculations to determine $\A^{-1}$}
\label{app:Ainv}
We use the same procedure as we did in our previous work \cite{GhS15}. We decompose $\A$ as $\A=\A_0+\A_e$ where $\A_0$ consists of only the non-degradation rates $\gamma_i$ terms. Specifically, $\A_0$ and $\A_e$ are respectively given by
\begin{align}
\A_0={\begin{bmatrix}
-k(1-\mu)                 & 0                                          & \cdots                                              & 0              \\
k\mu(1-\mu)            & -k(1-\mu)                      & \cdots        & 0                                \\
k\mu(1-\mu)^2        & k\mu(1-\mu)                               & \cdots                                           & 0               \\
\vdots                          & \vdots                                            & \vdots         & \vdots                                             \\
k\mu(1-\mu)^{X-2}  & k\mu(1-\mu)^{X-3}                     & \cdots         & 0    \\
k\mu(1-\mu)^{X-1}  & k\mu(1-\mu)^{X-2}                     & \cdots                         & -k(1-\mu) \\
\end{bmatrix}},
\end{align}
\begin{align}
\A_e= 
 {\begin{bmatrix}
0            & \gamma_1                                       & \cdots        & 0                                            & 0              \\
0           & -\gamma_1                      & \cdots        & 0                                             & 0              \\
0           & 0                             & \ddots         & 0                                             & 0               \\
\vdots                          & \vdots                                            & \vdots         & \vdots                                      & \vdots          \\
0           & 0                   & \cdots         &  -(X-2)\gamma_{X-2} & (X-1)\gamma_{X-1}     \\
0           & 0                     & \cdots         & 0               & -(X-1)\gamma_{X-1}  \\
\end{bmatrix}}.
\end{align}
Note that inverse of $\A$ can be written as 
\begin{align}
\A^{-1}=\left(\A_0 + \A_e\right)^{-1}= \left(\I + \A_0^{-1}\A_e \right)^{-1}\A_0^{-1}.
\end{align}
The term $\A_0^{-1}$ is easy to determine and is given by
\begin{align}
\A_0^{-1}=-\frac{\mu}{1-\mu}{\begin{bmatrix}
\frac{1}{\mu k}                 & 0                                          & \cdots        & 0                                            & 0              \\
\frac{1}{k}            & \frac{1}{\mu k}                       & \cdots        & 0                                             & 0              \\
\frac{1}{k}           & \frac{1}{k}                              & \cdots         & 0                                             & 0               \\
\vdots                          & \vdots                                            & \vdots         & \vdots                                      & \vdots          \\
\frac{1}{k}   & \frac{1}{k}                     & \cdots         &  \frac{1}{\mu k} & 0    \\
\frac{1}{k}  & \frac{1}{k}                     & \cdots         & \frac{1}{k}                & \frac{1}{\mu k}  \\
\end{bmatrix}}.
\label{eqn:A0inv}
\end{align}
Further, we also need inverse of $\E:=\left(\I+\A_0^{-1}\A_e\right)$ in order to determine $\A^{-1}$. To this end, let us compute $\A_0^{-1}\A_e$: 
\begin{subequations}
\begin{align}
&-\frac{\mu}{1-\mu}\left\{{{\begin{bmatrix}
\frac{1}{\mu k}                 & 0                                          & \cdots        & 0                                            & 0              \\
\frac{1}{k}            & \frac{1}{\mu k}                       & \cdots        & 0                                             & 0              \\
\frac{1}{k}           & \frac{1}{k}                              & \cdots         & 0                                             & 0               \\
\vdots                          & \vdots                                            & \vdots         & \vdots                                      & \vdots          \\
\frac{1}{k}   & \frac{1}{k}                     & \cdots         &  \frac{1}{\mu k} & 0    \\
\frac{1}{k}  & \frac{1}{k}                     & \cdots         & \frac{1}{k}                & \frac{1}{\mu k}  \\
\end{bmatrix}}} \right.\\ 
& \qquad \left. {\begin{bmatrix}
0            & \gamma_1                                        & \cdots        & 0                                            & 0              \\
0           & -\gamma_1                      & \cdots        & 0                                             & 0              \\
0           & 0                             & \ddots         & 0                                             & 0               \\
\vdots                          & \vdots                                            & \vdots         & \vdots                                      & \vdots          \\
0           & 0                   & \cdots         &  -(X-2)\gamma_{X-2} & (X-1)\gamma_{X-1}    \\
0           & 0                     & \cdots         & 0               & -(X-1)\gamma_{X-1} \\
\end{bmatrix}}\right\}\\
&= {\begin{bmatrix}
0            & -\frac{\gamma_1}{(1-\mu) k}                                        & \cdots        & 0                                            & 0              \\
0           & \frac{\gamma_1}{k}                      & \cdots        & 0                                             & 0              \\
0           & 0                             & \ddots         & 0                                             & 0               \\
\vdots                          & \vdots                                            & \vdots         & \vdots                                      & \vdots          \\
0           & 0                   & \cdots         &  \frac{(X-2)\gamma_{X-2}}{k}& -\frac{(X-1)\gamma_{X-1}}{(1-\mu)k}    \\
0           & 0                     & \cdots         & 0               & \frac{(X-1)\gamma_{X-1}}{k} \\
\end{bmatrix}}.
\end{align}
\end{subequations}
The matrix $\E$ is a bidiagonal matrix with its diagonal elements $1+\frac{(i-1)\gamma_{i-1}}{k}=\frac{k+(i-1)\gamma_{i-1}}{k}$ for $i=1,2,\cdots,X$. The super diagonal elements are given by $-\frac{j\gamma_j}{(1-\mu)k}$ for $j=1,2,\cdots,X-1$. Using the result for inverse of a bidiagonal matrix derived in \cite{Cha74}, we can write the $(i,j)$ element of $\E^{-1}$ as follows
\begin{equation}
e'_{i,j}= \begin{cases} 0 &\mbox{if } i > j, \\ 
\frac{k}{k+(i-1)\gamma_{i-1}}, & \mbox{if } i=j,\\
\frac{k}{k+(i-1)\gamma_{i-1}} \prod_{l=i}^{j-1}\left(\frac{1}{k+l\gamma_l} \frac{l\gamma_l}{(1-\mu)}\right), & \mbox{if } i<j.
 \end{cases}
\end{equation}
In matrix form, $\E^{-1}$ can be written as
\begin{align}
\E^{-1}={\begin{bmatrix}
1            &  \frac{\gamma_1}{k+\gamma_1}\frac{1}{1-\mu}                    & \cdots        & \prod_{l=1}^{j-1}\left(\frac{l\gamma_l}{k+l\gamma_l} \frac{1}{1-\mu}\right)           \\
0           & \frac{k}{k+\gamma_1}                      & \cdots          & \frac{k}{k+\gamma_1} \prod_{l=2}^{j-1}\left(\frac{l\gamma_l}{k+l\gamma_l} \frac{1}{1-\mu}\right) \\
\vdots                          & \vdots                                            & \vdots         & \vdots                                               \\
0           & 0                 & \cdots             & \frac{k}{k+(X-1)\gamma_{X-1}}   \\
\end{bmatrix}}.
\label{eqn:Einv}
\end{align}
Thus, we have determined the matrices $\E^{-1}$ and $\A_0^{-1}$ which can be used to compute $\A^{-1}=\E^{-1}\A_0^{-1}$.

\subsection{Expression of $\U \A^{-1}$}
\label{app:UAinv}
In \cite{GhS15}, where both production and degradation of protein do not have feedback regulation, we show that $\U\A^{-1}$ simplifies to a row vector with all of its elements being $1$. It turns out that even in the case when the protein degradation has a feedback, the same result is true. The calculations are as follows. Consider two matrices $\G$ and $\H$ such that $\A_e = \G \H$ where $\G$ is a $X \times X-1$ matrix

\footnotesize{
\begin{align}
\G= 
{\begin{bmatrix}
-\gamma_1           & 0          &  0            & \cdots        & 0                                            & 0              \\
\gamma_1            &  -2\gamma_2        &  0            & \cdots        & 0                                             & 0              \\
0           & 2\gamma_2           &  -3\gamma_3           & \cdots         & 0                                             & 0               \\
0           & 0            & 3\gamma_3               & \cdots       & 0                                            & 0 \\
\vdots                          & \vdots                                            & \vdots         & \vdots                                      & \vdots          \\
0           & 0           & 0       & \cdots         &  (X-2)\gamma_{X-2}      & -(X-1)\gamma_{X-1}     \\
0           & 0           & 0        & \cdots         & 0               & (X-1)\gamma_{X-1}   \\
\end{bmatrix}},
\end{align}}
\normalsize
while $\H$ is a $X-1 \times X$ matrix
\begin{align}
\H= 
 {\begin{bmatrix}
0           & -1          &  0            & \cdots        & 0                                            & 0              \\
0            &  0        &  -1           & \cdots        & 0                                             & 0              \\
\vdots                          & \vdots                                            & \vdots         & \vdots                                      & \vdots          \\
0           & 0           & 0       & \cdots         &  -1      & 0 \\
0           & 0           & 0        & \cdots         & 0               & -1  \\
\end{bmatrix}}.
\end{align}
We evoke the matrix inversion lemma to write $\A^{-1}$ as
\begin{subequations}
\begin{align}
\A^{-1} &= \left({\A_0} + \G  \H\right)^{-1} \\
                         &= {\A_0}^{-1}-\A_0^{-1}  \G \left(\I + \H\A_0^{-1}\G \right)^{-1}\H{\A_0}^{-1}.
\end{align}
This implies
\begin{align}
\U \A^{-1}= \U  \A_0^{-1} -  \U \A_0^{-1} \G \left(\I + \H\A_0^{-1}\G \right)^{-1}\H{\A_0}^{-1}.
\end{align}
\end{subequations}
Let us compute $\U \A_0^{-1}\G$:
\begin{subequations}
\begin{align}
& -{\begin{bmatrix}
1           \\  1                \\ 1 \\ 1    \\ \vdots       \\ 1             \\ 1
\end{bmatrix}}^T{\begin{bmatrix}
-\gamma_1           & 0                     & \cdots         & 0              \\
\gamma_1            &  -2 \gamma_2                    & \cdots        & 0              \\
0           & 2                     & \cdots          & 0               \\
0           & 0                       & \cdots        & 0 \\
\vdots                          & \vdots                                            & \vdots         & \vdots                                               \\
0           & 0                & \cdots              & -(X-1) \gamma_{X-1}  \\
0           & 0                 & \cdots             & X-1\gamma_{X-1}  \\
\end{bmatrix}}, \\
&= -{\begin{bmatrix}
0           &  0                   & \cdots        & 0            \\
\end{bmatrix}}.
\end{align}
\end{subequations}
Therefore, we can conclude that $\U \A^{-1}$ is in fact equal to $\U \A_0^{-1}$ which can be calculated by multiplying $\U$ and $\A_0^{-1}$ as shown below. 
\begin{subequations}
\begin{align}
&\U\A_0^{-1} = -\frac{\mu}{1-\mu} \times \nonumber \\ 
& {\begin{bmatrix}k \left(1-\mu\right)^{X} \\ k \left(1-\mu\right)^{X-1} \\ k \left(1-\mu \right)^{X-2}\\ \vdots \\  k \left(1-\mu\right)^2 \\ k \left(1-\mu\right)\end{bmatrix}}^T
{\begin{bmatrix}
\frac{1}{\mu k}                 & 0                                          & \cdots        & 0                                            & 0              \\
\frac{1}{k}            & \frac{1}{\mu k}                       & \cdots        & 0                                             & 0              \\
\frac{1}{k}           & \frac{1}{k}                              & \ddots         & 0                                             & 0               \\
\vdots                          & \vdots                                            & \vdots         & \vdots                                      & \vdots          \\
\frac{1}{k}   & \frac{1}{k}                     & \cdots         &  \frac{1}{\mu k} & 0    \\
\frac{1}{k}  & \frac{1}{k}                     & \cdots         & \frac{1}{k}                & \frac{1}{\mu k}  \\
\end{bmatrix}}\\
&= -\frac{\mu}{1-\mu}{\begin{bmatrix}\frac{\left(1-\mu \right)^X}{\mu} + \sum_{l=1}^{X-1}\left(1-\mu \right)^l \\ \frac{\left(1-\mu \right)^{X-1}}{\mu} + \sum_{l=1}^{X-2}\left(1-\mu \right)^l \\ \frac{\left(1-\mu \right)^{X-2}}{\mu} + \sum_{l=1}^{X-3}\left(1-\mu \right)^l \\ \vdots \\  \frac{\left(1-\mu\right)^2}{\mu} + \left(1-\mu \right) \\ \frac{1-\mu}{\mu} \end{bmatrix}}^T\\
&= -\frac{\mu}{1-\mu}{\begin{bmatrix}\frac{1-\mu}{\mu} & \frac{1-\mu}{\mu} &\frac{1-\mu}{\mu} & \cdots & \frac{1-\mu}{\mu}  & \frac{1-\mu}{\mu} \end{bmatrix}}\\
&=-{\begin{bmatrix} 1 & 1 & 1 & \cdots & 1  & 1 \end{bmatrix}}.
\label{eqn:UAinv}
\end{align}
\end{subequations}
\subsection{First two moments of $FPT$ for a birth-death process}
\label{sec:BDPformulas}
The matrix-based approach considered in this paper can be used to compute the formulas for moments of the first-passage time. These are given by:
\begin{subequations}
\begin{align}
&\left<FPT\right>=\sum_{i=1}^{X}\frac{1}{k}\left(1+\sum_{j=i+1}^{X}\prod_{l=i}^{j-1}\frac{l\gamma_l}{k}\right), \\
&\left<FPT^2\right>=2\sum_{i=1}^{X}\frac{1}{k}\left(\sum_{r=1}^{i}\tau_r+\sum_{j=i+1}^{X}\sum_{r=1}^{j}\tau_r \prod_{l=i}^{j-1}\frac{l\gamma_l}{k}\right),\;\;\text{where} \\
&\tau_i=\frac{1}{k}\left(1+\sum_{j=i+1}^{X}\prod_{l=i}^{j-1}\frac{l\gamma_l}{k}\right).
\end{align}
\end{subequations}
Though we have skipped the detailed steps here, these results are equivalent to the results derived in \cite{GFS15} and \cite{JoD08}. More specifically, if $\gamma_i=\gamma$, we get the results of \cite{GFS15}. The results in \cite{JoD08} are more general and simplify to our results if the birth rate is made constant while the death rate is taken as $\gamma_i$ instead of $i\gamma_i$. 
\bibliography{ACC16Ref}
\bibliographystyle{IEEEtran}
\end{document}